\documentclass[copyright,creativecommons]{eptcs}

\usepackage{breakurl,amsmath,amsthm,amssymb,xspace,xypic,enumerate}
\usepackage[usenames]{color}
\usepackage[off]{auto-pst-pdf}

\setlength\parindent{0pt}


\newcommand{\bit}{\begin{itemize}}
\newcommand{\eit}{\end{itemize}}
\newcommand{\ben}{\begin{enumerate}}
\newcommand{\een}{\end{enumerate}}

\theoremstyle{plain}

\newtheorem*{theorem*}{Theorem}

\theoremstyle{definition}

\newcommand{\bB}{\begin{color}{black}}
\newcommand{\e}{\end{color}}

\title{Non-locality in theories without the no-restriction hypothesis}

\author{Peter Janotta}
\author{Peter Janotta \qquad Raymond Lal
  \institute{Universit\"{a}t W\"{u}rzburg \quad University of Oxford}
  \email{peter.janotta@physik.uni-wuerzburg.de - raymond.lal@cs.ox.ac.uk}
  }
\date{}

\begin{document}
\maketitle

\begin{abstract}
The framework of generalized probabilistic theories (GPT) is a widely-used approach for studying the physical foundations of quantum theory. 
The standard GPT framework assumes the no-restriction hypothesis, in which the state space of a physical theory determines the set of measurements. 
However, this assumption is not physically motivated. 
In Janotta and Lal [Phys. Rev. A \textbf{87}, 052131 (2013)], it was shown how this assumption can be relaxed, and how such an approach can be used to describe new classes of probabilistic theories.
This involves introducing a new, more general, definition of \bB maximal \e joint state spaces, which we call the \emph{generalised maximal tensor product}.
Here we show that the generalised \bB maximal \e tensor product recovers the standard \bB maximal \e tensor product when at least one of the systems in a bipartite scenario obeys the no-restriction hypothesis.
We also show that, under certain conditions, relaxing the no-restriction hypothesis for a given state space does not allow for stronger non-locality, \bB although the generalized maximal tensor product may allow new joint states.\e
\end{abstract}

\paragraph{Background.}
The context of our results is the framework of \em generalized probabilistic theories (GPTs). \em
This is an operational approach for studying the physical foundations of quantum theory \cite{Barrett07}.
It is \em operational \em because a theory is defined according to the observable measurement statistics that it predicts.
Assuming only basic principles, the framework encompasses a large variety of theories, e.g.~quantum theory and classical probability theory.
Using GPTs one can examine the relationship between different physical properties, e.g.~no-cloning and nonlocality, without restricting to a particular physical theory.

\par\bigskip
For example, it has been shown that any non-classical probability theory has the following properties:  the existence of entanglement \cite{Barrett07}; for mixed states, the lack of a unique decomposition into a unique ensemble of pure states; generalizations of the no-cloning or no-broadcasting theorem \cite{Barnum08}; and, an information-disturbance trade-off \cite{Scarani06}. 
Notably, recent attempts to reconstruct quantum theory from physical axioms include the assumptions made in GPTs \cite{Masanes11, Dakic09} or very similar assumptions \cite{Chiribella11}.

\paragraph{The core elements of a GPT.}
To define a GPT, we must define at least three items: the set of states, the set of effects, and joint systems.
A \em state \em is defined as  an equivalence class of preparation procedures which all yield exactly the same measurement statistics.
Analogously we also define an \em effect \em as an equivalence class of measurement outcomes. 
Mathematically, states are represented by elements of a real vector space $V$. 
Effects are linear functionals on states, i.e. elements of the dual space $V^*$.
Applying an effect $e$ to a state $\omega$ yields the probability $p(e|\omega) = e(\omega)$ for the corresponding measurement outcome to occur when measuring the system in the state.
Both states and effects are then represented by vectors embedded in $\mathbb{R}^n$.
The application of effects on states is given by the Euclidean inner product of the respective vectors:
\begin{align}
 \label{eq:vecrep}
 e = \left( \epsilon_1, \cdots , \epsilon_n\right)^T \qquad \omega = \left( w_1 , \cdots , w_n\right)^T\\
 p(e|\omega) = e^T \!\!\!\cdot \omega = \sum_i \epsilon_i \, w_i
\end{align}
To account for uncertainty in state preparation or measurement, we represent \em mixed states \em and \em mixed effects \em by using convex combinations:
\begin{align}
 e &= \sum_i \lambda_i \, e_i \qquad \lambda_i \geq 0, \, \sum_i \lambda_i = 1\\
 \omega &= \sum_i \mu_j \, \omega_j \qquad \mu_i \geq 0, \, \sum_i \mu_i = 1
\end{align}
\em Pure states \em and \em pure effects \em are the extremal points of these convex sets, and the normalized states and effects and denoted by $\Omega^{A}$ and $E^A$ respectively.
\par\bigskip
To define composite systems, bipartite joint states are given by elements of the product space
\begin{align} 
\label{eq:productspace}
V^{AB} = V^A \otimes V^B
\end{align}
and joint effects are elements of $V^{AB*} = V^{A*} \otimes V^{B*}$.
\bB
Eq.~\eqref{eq:productspace} has important physical content, in particular, it implies \emph{local tomography}, which is the condition that bipartite states can be determined by the correlations of single-system measurements (conversely, in Ref.~\cite{Barrett07}, Barrett has shown how local tomography, in addition to other assumptions, can be used to derive Eq.~\eqref{eq:productspace} for an arbitrary GPT).
\e
However, we must define the set of joint states $\Omega^{AB}=\{\omega^{AB}\}$, and the set of joint effects $E^{AB}=\{e^{AB}\}$, such that these are \em consistent \em with the state spaces of the individual systems.
For example, note that the joint system should at least incorporate product effects and their mixtures. 
Then, for a joint state $\omega^{AB}$ to be consistent with the individual state spaces means that: (i) applying such joint effects to $\omega^{AB}$ should give probabilities (i.e.~elements of $[0,1]$);  (ii) the conditional states should be elements of the state space $\Omega^{A}$ (where by `conditional state' we mean the `post-measurement` state on one part of the joint system, conditioned on a particular measurement outcome on the other part).
This implies that the joint states form a subset of the \em maximal tensor product $\Omega^{AB}_{max}$, \em which is the largest set of such states. 
The maximal tensor product give all states respecting the positivity condition (i) with respect to product effects. 
Mathematically, positivity and therefore the maximal tensor product correspond to the construction of a so-called dual cone. 
For theories traditionally considered in the GPT framework it turns out that condition (ii) is actually equivalent to condition (i).

\paragraph{Our contribution.} 
In order to yield measurement probabilities effects are restricted to give values in the range of $[0,1]$ when applied to normalized states.
In the traditional framework of GPTs \bB (e.g.~Ref.~\cite{Barrett07})\e, the set of effects $E$ is \em not restricted any further. \em
That is, the set of effects is exactly the set of all probability-valued linear functionals on the given states.
We call this relationship between states and effects the \em no-restriction hypothesis\em, in accordance with \cite{Chiribella10}. 
It is satisfied for classical probability theory and quantum theory.
Crucially, if the no-restriction hypothesis holds, then \bB a large part of \e the theory is \em  completely determined \em by the state space, since the effect set{\bB---and hence the allowed measurements---\e}can be derived from the state space \bB(although there may still be freedom in defining the dynamics of the theory)\e.
Our contribution is to extend the framework of GPTs without the no-restriction hypothesis. 
There are two main reasons for doing so.
Firstly, the necessity of the no-restriction hypothesis is questionable from an operational perspective.
Indeed, considering the physical meaning of states and effects, there is no reason to believe that the possible preparation procedures determine possible measurements.
Secondly, this will generalize \bB the standard constructions of \e the GPT framework to cover scenarios that have not been explored previously.

\bB
\paragraph{Previous work.}
Although the framework of GPTs typically uses tools which assume the no-restriction hypothesis, there are related approaches which do not.
The formalism of test spaces developed by Foulis and Randall does not make the assumption \cite{Foulis1972,Foulis1973} (see also the overview given by Ref.~\cite{Wilce2009} and references therein). 
Barnum and Wilce use a framework of convex sets, without the no-restriction hypothesis, to describe information processing in general theories \cite{Barnum2011}, and Wilce derives the Jordan structure of quantum theory using this framework \cite{Wilce12}. Similarly, the full reconstruction of quantum theory by Chiribella, D'Ariano and Perinotti (CDP) also does not make the assumption \cite{Chiribella11}.
However, since the CDP assumptions lead to quantum theory, they determine the tensor product uniquely; in contrast, the aim of our work is to use the GPT framework to explore the range of tensor products that are consistent with any given single systems.
\e
\par\bigskip
The first step in our work is to allow $E^A$ to be defined as a set of linear functionals on $\Omega^A$, but not necessarily the \em full \em set of such functionals.
However, when defining composite systems, we are faced with a difficulty. 
When using the definition of $\Omega^{AB}_{max}$, i.e. we consider all joint states that give positive results on product effects, we obtain states $\omega^{AB}$ for which the conditional states are no longer in the single-system state spaces $\Omega^{A}$.
Referring to the requirements as described in the last section this means consistency requirement (i) does no longer imply condition (ii) for restricted systems.

Our solution is to define a \em generalized maximal tensor product \em $\overline{\Omega}^{AB}_{max} := \Omega^{\mathcal{A}B}_{max} \cap \Omega^{A\mathcal{B}}_{max}$. This is the intersection of two traditional maximal tensor products with unrestricted auxiliary systems $\mathcal{A}$ and $\mathcal{B}$ that have state spaces equal to the original systems, but effects extended to the full set of probability valued functionals.
In \cite{norestriction} we show that:
\par\smallskip
\begin{theorem*}
Let $\Omega^A$ and $E^A$ be the state space and effect space, respectively, of a GPT without the no-restriction hypothesis.
Then $\omega^{AB}\in \overline{\Omega}^{AB}_{max} $ iff
$\omega^{AB}$ has well-defined conditional states.
\end{theorem*}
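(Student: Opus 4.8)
The plan is to prove each direction of the biconditional by reducing it to a statement about dual cones, exploiting the fact that the defining condition of $\Omega^{\mathcal{A}B}_{max}$ is positivity on the product effects $e^{\mathcal{A}}\otimes e^B$ with $e^{\mathcal{A}}$ ranging over the \emph{full} effect cone $E^{\mathcal{A}}$ of the unrestricted auxiliary system $\mathcal{A}$. First I would write the (unnormalized) conditional state of $A$ given a $B$-outcome $e^B\in E^B$ as the partial evaluation $\sigma^A_{e^B} := (\mathrm{id}_A\otimes e^B)(\omega^{AB})$; the corresponding normalized conditional state is $\sigma^A_{e^B}/p$, where $p$ is the outcome probability obtained by also applying the unit effect on $A$. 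Thus ``the conditional states on $A$ are well-defined'' means that $\sigma^A_{e^B}$ lies in the state cone $K_A := \mathrm{cone}(\Omega^A)$ for every $e^B\in E^B$. By the symmetry between $A$ and $B$ it will suffice to treat this $A$-side condition and then pair it with its $B$-side mirror image.

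The crux is the following duality identification. Since the full effect cone $E^{\mathcal{A}}$ is by construction the cone of all nonnegative functionals on $\Omega^A$, it is exactly the dual cone $K_A^{*}$. As $K_A$ is a closed convex cone, the bipolar theorem gives $K_A = K_A^{**} = (E^{\mathcal{A}})^{*}$, so a vector $\sigma\in V^A$ lies in $K_A$ if and only if $e^{\mathcal{A}}(\sigma)\geq 0$ for all $e^{\mathcal{A}}\in E^{\mathcal{A}}$. Applying this to $\sigma = \sigma^A_{e^B}$ and unfolding the partial evaluation via $(e^{\mathcal{A}}\otimes e^B)(\omega^{AB}) = e^{\mathcal{A}}\bigl(\sigma^A_{e^B}\bigr)$, I would conclude that $\sigma^A_{e^B}\in K_A$ for every $e^B\in E^B$ precisely when $(e^{\mathcal{A}}\otimes e^B)(\omega^{AB})\geq 0$ for all $e^{\mathcal{A}}\in E^{\mathcal{A}}$ and all $e^B\in E^B$. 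The latter is exactly the positivity-on-product-effects condition defining $\Omega^{\mathcal{A}B}_{max}$, so ``the conditional states on $A$ are well-defined'' is equivalent to $\omega^{AB}\in\Omega^{\mathcal{A}B}_{max}$.

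Running the mirror-image argument with the roles of $A$ and $B$ exchanged shows that ``the conditional states on $B$ are well-defined'' is equivalent to $\omega^{AB}\in\Omega^{A\mathcal{B}}_{max}$. Intersecting the two equivalences then yields $\omega^{AB}\in\Omega^{\mathcal{A}B}_{max}\cap\Omega^{A\mathcal{B}}_{max} = \overline{\Omega}^{AB}_{max}$ if and only if all conditional states on both wings are well-defined, which is the claim.

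I expect the main obstacle to be the duality step rather than the bookkeeping: one must verify that $K_A$ is genuinely closed, so that the bipolar theorem applies and no states are lost in passing to the double dual, and that the auxiliary effect cone is really the \emph{full} dual cone $K_A^{*}$ and not some proper subcone (this is precisely the point at which relaxing the no-restriction hypothesis on $A$ itself but keeping $\mathcal{A}$ unrestricted does the work). A secondary subtlety is normalization: I would check that forming the conditional state only rescales $\sigma^A_{e^B}$ by the positive scalar $p$, so cone membership is unaffected, and that conditioning against an arbitrary single effect $e^B$, rather than against a complete measurement, captures the same constraint, since each $e^B\in E^B$ occurs as an outcome of some measurement on $B$.
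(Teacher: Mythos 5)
Your proposal is correct, and it is essentially the paper's own argument---note that this extended abstract does not reproduce the proof but imports the theorem from the companion paper \cite{norestriction}, where precisely this duality argument is given. The key steps match: identifying the unrestricted effect cone $E^{\mathcal{A}}$ with the full dual cone $K_A^{*}$, invoking the bipolar theorem $K_A = (E^{\mathcal{A}})^{*}$ (legitimate because $\Omega^A$ is compact, convex and excludes the origin, so $K_A$ is closed), so that each one-sided condition ``conditional states on $A$ induced by $e^B \in E^B$ lie in $K_A$'' becomes exactly positivity on $E^{\mathcal{A}} \otimes E^B$, i.e.\ membership in $\Omega^{\mathcal{A}B}_{max}$, with the mirror condition giving $\Omega^{A\mathcal{B}}_{max}$, and the intersection giving $\overline{\Omega}^{AB}_{max}$.
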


The generalized maximal tensor product is a bona fide extension of the standard construction: the former reduces to the latter when the no-restriction hypothesis is assumed.
In general it lies between the traditional tensor product of the unrestricted auxiliary systems $\mathcal{A}$, $\mathcal{B}$ and the one of the original (restricted) systems $A$, $B$:
\begin{align}
 \label{eq:tensorhierachy}
 \Omega^{\mathcal{A}\mathcal{B}}_{max} \subseteq \overline{\Omega}^{AB}_{max} \subseteq \Omega^{AB}_{max}
\end{align}

\paragraph{Notable results.} 
Our construction allows us to explore new models.
\bit
\item {\it Noisy boxworld}: 
By removing the no-restriction hypothesis we can model systems with intrinsic noise, i.e. systems for which the unit measure is the \em only \em certain outcome for any state.
For example, we can introduce noise to the theory known as `boxworld' \cite{Barrett07}, which has become widely studied as it shows nonlocal correlations stronger than quantum theory \cite{Tsirelson, PRbox}.
The maximal CHSH violation $S^\lambda$ as a function of the noise parameter $\lambda$ of noisy boxworld can be shown to be $4 \, \lambda^2$.
\item {\it Self-dualized polygons}: 
The class of \em (strongly) self-dual \em systems has received much attention \cite{Barnum08,polypaper}. 
These are systems with a particular geometrical structure, shared by both classical probability theory and quantum theory.
For strongly self-dual systems states and effects can be identified with each other and thus be represented by the same mathematical objects.
We provide a self-dualization procedure that can be used to produce a self-dual theory from any theory in the GPT framework, and which requires relaxing the no-restriction hypothesis.
We show that the self-dualized versions of toy theories with state spaces given by regular polygons show quantum correlations, whereas the original models have correlations that are stronger than quantum correlations. 
\item {\it Spekkens toy theory}: 
The Spekkens theory \cite{Spekkens} is a local theory, meaning that (in the probabilistic version) it cannot violate any Bell inequalities.
However, the theory has entangled states.
This raises the question of why the Spekkens theory does not exhibit bipartite nonlocality. 
We show that a probabilistic version of the Spekkens theory does not satisfy the no-restriction hypothesis; 
were it to satisfy this principle by taking the full dual cone, it would produce maximal nonlocal correlations.
\eit

\section*{New results}

As mentioned above the generalized maximal tensor product reduces to the traditional maximal tensor product if both systems obey the no-restriction hypothesis. 
In the following we show that the tensor products are equivalent even if only one of the systems is unrestricted.

\begin{theorem*}
 The generalized maximal tensor product $\overline{\Omega}^{AB}_{max}$ of two systems reduces to the traditional maximal tensor product $\Omega^{\mathcal{A}\mathcal{B}}_{max}$ of the auxiliary systems $\mathcal{A}$, $\mathcal{B}$, if one of the systems obeys the no-restriction hypothesis.
\end{theorem*}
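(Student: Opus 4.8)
The plan is to reduce the claim to a set-theoretic collapse of the intersection that defines $\overline{\Omega}^{AB}_{max}$, using the inclusion chain \eqref{eq:tensorhierachy} that has already been established. Without loss of generality, suppose it is system $A$ that obeys the no-restriction hypothesis. By definition this means $E^A$ is already the \emph{full} set of probability-valued linear functionals on $\Omega^A$, which is precisely the effect set with which the auxiliary system $\mathcal{A}$ is equipped. Since $\mathcal{A}$ is defined to have the same state space as $A$ and its effects extended to exactly this full dual cone, the two systems coincide: $\Omega^{\mathcal{A}} = \Omega^A$ and $E^{\mathcal{A}} = E^A$. The first step is simply to record this identification, which is the conceptual heart of the argument.

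The second step is to substitute $\mathcal{A} = A$ into the definition $\overline{\Omega}^{AB}_{max} := \Omega^{\mathcal{A}B}_{max} \cap \Omega^{A\mathcal{B}}_{max}$. Because each maximal tensor product is the dual cone of the corresponding set of product effects, replacing $E^A$ by the identical $E^{\mathcal{A}}$ leaves the construction unchanged. Hence $\Omega^{\mathcal{A}B}_{max} = \Omega^{AB}_{max}$, as both are the dual cone of $E^A \otimes E^B$, while $\Omega^{A\mathcal{B}}_{max} = \Omega^{\mathcal{A}\mathcal{B}}_{max}$, as both are the dual cone of $E^{\mathcal{A}} \otimes E^{\mathcal{B}}$. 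The generalized maximal tensor product thus becomes $\overline{\Omega}^{AB}_{max} = \Omega^{AB}_{max} \cap \Omega^{\mathcal{A}\mathcal{B}}_{max}$.

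Finally, I would invoke the inclusion $\Omega^{\mathcal{A}\mathcal{B}}_{max} \subseteq \Omega^{AB}_{max}$ from \eqref{eq:tensorhierachy}. Since one of the two sets in the intersection contains the other, the intersection equals the smaller one, giving $\overline{\Omega}^{AB}_{max} = \Omega^{\mathcal{A}\mathcal{B}}_{max}$, as claimed. The case where $B$ is unrestricted is symmetric under exchange of the two tensor factors.

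I do not expect a genuine computational obstacle here: the statement is a formal consequence of the definitions together with the already-proven hierarchy, and the only real work is recognising that the no-restriction hypothesis makes $A$ and $\mathcal{A}$ literally the same system, so that two of the four maximal tensor products in play collapse onto the other two. The one point requiring care is the \emph{direction} of the relevant inclusion: enlarging an effect set tightens the positivity constraints and therefore shrinks the dual cone, so it is the doubly-extended system $\mathcal{A}\mathcal{B}$ that sits inside $\Omega^{AB}_{max}$ rather than the reverse. Were I to make the argument self-contained instead of citing \eqref{eq:tensorhierachy}, this antitone behaviour of the dual-cone construction is the single fact I would need to verify explicitly.
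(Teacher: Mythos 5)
Your proposal is correct and follows essentially the same route as the paper: identify $\mathcal{A}=A$ under the no-restriction hypothesis, substitute into the definition $\overline{\Omega}^{AB}_{max} = \Omega^{\mathcal{A}B}_{max} \cap \Omega^{A\mathcal{B}}_{max}$, and collapse the intersection using the inclusion of $\Omega^{\mathcal{A}\mathcal{B}}_{max}$ in the other factor. The only cosmetic difference is that you cite the already-established hierarchy \eqref{eq:tensorhierachy} for that inclusion, whereas the paper re-derives it on the spot from the antitone behaviour of the dual-cone construction---precisely the fact you flag at the end as the one thing needing verification in a self-contained version.
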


\begin{proof}
Without loss of generality, assume that system $A$ is the unrestricted system.
Recall that the auxiliary system $\mathcal{A}$ extends the effect set to the full set of effects compatible with the given state space.
Since system $A$ is already unrestricted, we have $A = \mathcal{A}$ and consequently $\overline{\Omega}^{AB}_{max} = \Omega^{\mathcal{A}B}_{max} \cap \Omega^{\mathcal{A}\mathcal{B}}_{max}$.
Thus the generalized maximal tensor product is now given by the intersection of the standard maximal tensor product of auxiliary systems $\mathcal A$ and $\mathcal B$, and standard maximal tensor product of auxiliary system $\mathcal{A}$ and the original restricted system $B$.
As discussed above, the standard maximal tensor product is defined by positivity with respect to product effects, i.e. $\Omega^{\mathcal{A}B}_{max}$ are the normalized elements in $V^A \otimes V^B$ that give positive results on $E^\mathcal{A}\otimes E^B$, whereas $\Omega^{AB}_{max}$ is defined by positive results on $E^A\otimes E^B$.
In general, the restriction of a set of functionals enlarges the set of elements for which each functional yields a positive real.
Hence the restriction of the local effects from $E^\mathcal B$ to $E^B$ yields a strictly larger tensor product, 
and so $\Omega^{\mathcal{A}\mathcal{B}}_{max} \subset \Omega^{\mathcal{A}B}_{max}$.
The intersection of a set with a strictly larger set obviously gives the original set, which in this case is the standard maximal tensor product $\Omega^{\mathcal{A}\mathcal{B}}_{max}$ of the auxiliary systems.    
\end{proof}

Now, while restriction of an effect space yields fewer local measurements, there are additional joint states in the generalized maximal tensor product.
An open question is then whether restriction of local effects always limits the possible correlations to those in the unrestricted systems. 
In the following we provide a positive answer to the question for the special case where the restriction results from a linear bijection.

\begin{theorem*}
 Let $A$ and $B$ be systems with restricted effect sets $E^{A} = L^A \cdot E^\mathcal{A}$ that can be generated by a linear bijective map $L^A$ from the full unrestricted set of probability valued functionals $E^\mathcal{A}$.
 Then correlations on the generalized maximal tensor product $\overline{\Omega}^{AB}_{max}$ of the restricted systems $A$, $B$ can be at most as strong as on the maximal tensor product $\Omega^{\mathcal{A}\mathcal{B}}_{max}$ of the systems with unrestricted effect sets.
\end{theorem*}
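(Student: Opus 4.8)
The plan is to use the linear bijection $L^A$ (and likewise $L^B$) to transport every correlation produced in the restricted theory into one produced in the unrestricted theory, by moving the deformation off the effects and onto the states. The key observation is that for any state $\omega$ and any restricted effect $e^A = L^A \cdot e^{\mathcal{A}}$ with $e^{\mathcal{A}}\in E^{\mathcal{A}}$, the pairing rewrites as $(e^A)^T \omega = (e^{\mathcal{A}})^T (L^A)^T \omega$. In words, applying the restricted effect $e^A$ to $\omega$ is the same as applying the unrestricted effect $e^{\mathcal{A}}$ to the transformed state $(L^A)^T\omega$, so the dual (transpose) map $(L^A)^T$ carries the restriction from the effect side to the state side, and it is again a linear bijection.

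First I would define, for a joint state $\omega^{AB}\in \overline{\Omega}^{AB}_{max}$, the transformed joint state $\tilde{\omega}^{AB} := \left((L^A)^T \otimes (L^B)^T\right)\omega^{AB}$. Given any local measurements $\{e^A_{a|x}\}$ on $A$ and $\{f^B_{b|y}\}$ on $B$, their bijective preimages $e^{\mathcal{A}}_{a|x} = (L^A)^{-1} e^A_{a|x}\in E^{\mathcal{A}}$ and $f^{\mathcal{B}}_{b|y} = (L^B)^{-1} f^B_{b|y}\in E^{\mathcal{B}}$ satisfy $(e^{\mathcal{A}}_{a|x}\otimes f^{\mathcal{B}}_{b|y})^T \tilde{\omega}^{AB} = ((L^A e^{\mathcal{A}}_{a|x})\otimes(L^B f^{\mathcal{B}}_{b|y}))^T \omega^{AB} = (e^A_{a|x}\otimes f^B_{b|y})^T \omega^{AB}$. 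Thus the entire correlation table generated by $\omega^{AB}$ with the restricted measurements is reproduced exactly by $\tilde{\omega}^{AB}$ with the corresponding unrestricted measurements.

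The crux is then to verify that $\tilde{\omega}^{AB}$ is a legitimate joint state of the unrestricted systems, i.e.\ that $\tilde{\omega}^{AB}\in \Omega^{\mathcal{A}\mathcal{B}}_{max}$. For positivity, take any product effect $e^{\mathcal{A}}\otimes f^{\mathcal{B}}$ with $e^{\mathcal{A}}\in E^{\mathcal{A}}$, $f^{\mathcal{B}}\in E^{\mathcal{B}}$; then $(e^{\mathcal{A}}\otimes f^{\mathcal{B}})^T \tilde{\omega}^{AB} = ((L^A e^{\mathcal{A}})\otimes(L^B f^{\mathcal{B}}))^T \omega^{AB}$, and since $L^A e^{\mathcal{A}}\in E^A$ and $L^B f^{\mathcal{B}}\in E^B$ while $\omega^{AB}\in\overline{\Omega}^{AB}_{max}\subseteq\Omega^{AB}_{max}$ is positive on all of $E^A\otimes E^B$, this value is nonnegative. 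For normalization I would use that $A$ and $\mathcal{A}$ share the same state space and hence the same unit effect $u$, together with the fact that a physical relabeling of measurements must fix the certain (trivial) measurement, i.e.\ $L^A u = u$ and $L^B u = u$; this makes the preimage measurements sum to $u$ and yields $(u\otimes u)^T\tilde{\omega}^{AB} = (u\otimes u)^T\omega^{AB} = 1$.

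Combining these facts, every correlation realized by the restricted systems $A,B$ on $\overline{\Omega}^{AB}_{max}$ is realized by the unrestricted systems $\mathcal{A},\mathcal{B}$ on $\Omega^{\mathcal{A}\mathcal{B}}_{max}$, which is precisely the assertion that the former correlations are at most as strong as the latter. The main obstacle I expect is the membership $\tilde{\omega}^{AB}\in\Omega^{\mathcal{A}\mathcal{B}}_{max}$: the positivity step rides entirely on the bijectivity of $L$ together with the hierarchy $\overline{\Omega}^{AB}_{max}\subseteq\Omega^{AB}_{max}$, but the normalization step is the delicate point, since it requires the bijection to \emph{fix} the unit effect rather than merely permute the effect set. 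I would therefore make unit-preservation explicit, either as a consistency condition on $L^A,L^B$ or by deriving it from the requirement that the restricted effect set admit complete measurements.
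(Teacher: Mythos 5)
Your proposal is correct and is essentially the paper's own argument: the paper uses the representation identity $(L \cdot e)[(L^{-1})^T \omega] = e(\omega)$ to equate the unrestricted correlations $E^\mathcal{A} \otimes E^\mathcal{B}(\Omega^{\mathcal{A}\mathcal{B}}_{max})$ with $E^A \otimes E^B(\Omega^{AB}_{max})$, and then applies the hierarchy $\overline{\Omega}^{AB}_{max} \subseteq \Omega^{AB}_{max}$ --- which is exactly your transport of the deformation from effects onto states via the transpose map, just read in the opposite direction. The normalization subtlety you flag (that $L$ must fix the unit effect for the transformed state to remain normalized) is equally implicit in the paper's identification $(L^{A,-1} \otimes L^{B,-1})^T \cdot \Omega^{\mathcal{A}\mathcal{B}}_{max} = \Omega^{AB}_{max}$, so making it explicit is a refinement of, not a departure from, the paper's proof.
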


\begin{proof}
 First let us rewrite the correlations possible in the unrestricted systems $\mathcal{A}$, $\mathcal{B}$. 
Note that for any theory there is some freedom in choosing representations of effects and states.
 In particular, one can transform the effect set by a linear bijection $L$, and counter it with a corresponding transformation $(L^{-1})^T$ on states; this will not affect measurement probabilities, since:
 \begin{align}
  \label{eq:equivrepresent}
  (L \cdot e)[(L^{-1})^T \omega] = (L \cdot e)^T \cdot (L^{-1})^T \cdot \omega = e^T \cdot (L^{-1} \cdot L)^T \cdot \omega = e^T \cdot \omega = e(\omega).
 \end{align}
 The set of bipartite correlations in a GPT result from applying each of the possible combinations of local effects in $E^\mathcal{A}$ and $E^\mathcal{B}$ to each of the joint states in $\Omega^{\mathcal{A}\mathcal{B}}_{max}$.
 Accordingly, we denote the set of all correlations as $E^\mathcal{A} \otimes E^\mathcal{B} (\Omega^{\mathcal{A}\mathcal{B}}_{max})$.
 Using Eq.~\eqref{eq:equivrepresent} this can be translated into
 \begin{align}
  \label{eq:unrestrictedcorrelations}
  E^\mathcal{A} \otimes E^\mathcal{B} \left(\Omega^{\mathcal{A}\mathcal{B}}_{max}\right) = \left(L^A \cdot E^\mathcal{A} \otimes L^B \cdot E^\mathcal{B}\right)\left[\left(L^{A,-1} \otimes L^{B,-1}\right)^T \cdot \Omega^{\mathcal{A}\mathcal{B}}_{max}\right] = E^A \otimes E^B \left(\Omega^{AB}_{max}\right). 
 \end{align}
 Consequently, the correlations of the unrestricted effects on the standard maximal tensor product of the unrestricted systems is equal to the correlations by the restricted effect set on the standard maximal tensor product of the restricted systems.

 We now compare this to the correlations that are actually possible in the restricted systems given by $E^A \otimes E^B (\overline{\Omega}^{AB}_{max})$, which uses the generalized maximal tensor product.
According to Eq.~\eqref{eq:tensorhierachy} we have $\overline{\Omega}^{AB}_{max} \subseteq \Omega^{AB}_{max}$, and together with Eq.~\ref{eq:unrestrictedcorrelations} this implies:
 \begin{align}
  E^A \otimes E^B \left(\overline{\Omega}^{AB}_{max}\right) \subseteq E^A \otimes E^B \left(\Omega^{AB}_{max}\right) = E^\mathcal{A} \otimes E^\mathcal{B} \left(\Omega^{\mathcal{A}\mathcal{B}}_{max}\right).
 \end{align}
\end{proof}

Note that this result only applies to restrictions by a linear bijection.
The general question if non-linear restrictions of effects can produce correlations exceeding the unrestricted case is still open.

\paragraph{Future work.}
There are several possible further avenues, but it would be interesting in particular to understand the relationship between our constructions and the formalism of categorical quantum mechanics \cite{AC}, in which the definition of composite systems is the primary formal device.

\paragraph{Acknowledgements.}
PJ is supported by the German Research Foundation (DFG).
RL is supported by the Templeton Foundation.

\small
\bibliographystyle{eptcs}
\bibliography{QPL2013}
\end{document}